\newif\ifFull
\newcommand{\R}{{\bf R}}
\newenvironment{proof}{\noindent{\bf Proof:}}{\hspace*{\fill}\rule{6pt}{6pt}\bigskip}
\newtheorem{theorem}{Theorem}
\newtheorem{lemma}[theorem]{Lemma}
\newtheorem{corollary}[theorem]{Corollary}
\begin{document}

\title{Linear-Time Algorithms for Geometric Graphs \\ 
with Sublinearly Many Edge Crossings}

\author{
David Eppstein\\
Dept.~of Computer Science \\
University of California, Irvine \\
\url{http://www.ics.uci.edu/~eppstein/}
\and 
Michael T. Goodrich \\
Dept.~of Computer Science \\ 
University of California, Irvine \\
\url{http://www.ics.uci.edu/~goodrich/}
\and
Darren Strash\\
Dept.~of Computer Science \\ 
University of California, Irvine \\
\url{http://www.ics.uci.edu/~dstrash/}
}

\date{}
\maketitle 

\begin{abstract} 
We provide linear-time algorithms 
for geometric graphs with sublinearly many edge crossings.
That is, we provide algorithms running in $O(n)$ time
on connected geometric graphs having $n$ vertices and $k$ crossings, where $k$
is smaller than $n$ by an iterated logarithmic factor.
Specific problems we study include Voronoi diagrams and single-source 
shortest paths.
Our algorithms all run in linear 
time in the standard comparison-based computational model;
hence, we make no assumptions about the distribution or bit
complexities of edge weights, nor do we utilize unusual bit-level
operations on memory words.
Instead, our algorithms are based on 
a \emph{planarization} method that ``zeroes in'' on edge crossings, together
with methods for extending planar separator decompositions 
to geometric graphs with sublinearly many crossings.
Incidentally, our planarization algorithm also 
solves an open computational geometry
problem of Chazelle for triangulating a self-intersecting polygonal
chain having $n$ segments and $k$ crossings 
in linear time, for the case when $k$ is 
sublinear in $n$ by an iterated logarithmic factor.
\end{abstract}

%\clearpage
\section{Introduction}
A \emph{geometric graph}~\cite{ttgg} is an embedding of a graph
$G=(V,E)$ in $\R^2$ so that each vertex $v$ is associated with a unique 
point $p$ in $\R^2$ and each edge is ``drawn'' as a straight line segment
joining the points associated with its end vertices.
Moreover, the edges incident on each vertex $v$ are given in angular
order around $v$, so that faces in the embedding of
$G$ in $\R^2$ are well-defined (e.g., using the next-clockwise-edge ordering). 
Thus, we use the same notation and terminology to refer to
$G$ and its embedding.
If the edges in $G$ have no crossings, then $G$ is said to be a 
\emph{plane graph}, 
while graphs that admit realizations as plane graphs 
are \emph{planar graphs}~\cite{dett-gd-99,eg-sbdgt-70}.

Geometric graphs are natural abstractions of the geometric and
connectivity relationships that arise in a number of applications, 
including road networks, railroad
networks, and utility distribution grids, 
as well as
sewer lines and the physical connections defining the
Internet.
An example road network is shown in Figure~\ref{fig-ny}.

\begin{figure}[htb]
%\vspace*{-10pt}
\begin{center}
\includegraphics[height=3in]{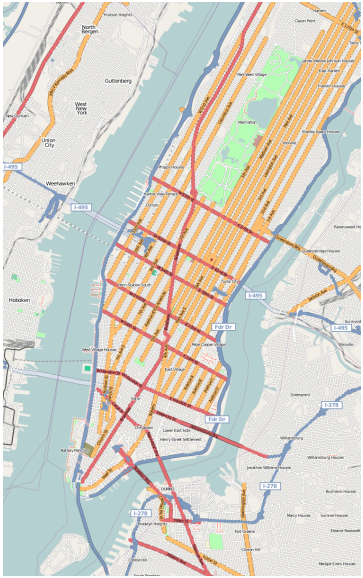} 
\end{center}
%\vspace*{-10pt}
\caption{A portion of the road network 
surrounding the location of SODA 2009.
This image is from
\textsf{http://wiki.openstreetmap.org/},
under the Creative Commons attribution-share alike license.
}
\label{fig-ny}
\end{figure}

\begin{figure*}[hbt]
\centering\includegraphics[width=4.5in]{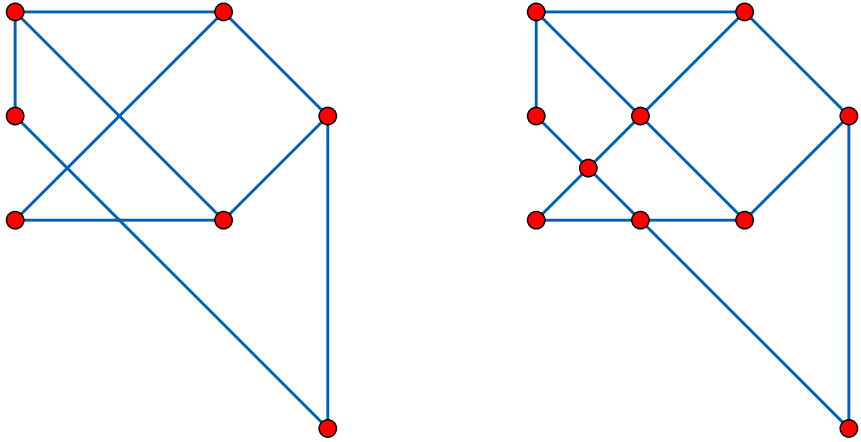}
\caption{A geometric graph and its planarization.}
\label{fig:planarization}
\end{figure*}

Although planar graphs and their plane graph realizations
have been studied extensively (e.g., see~\cite{t-pg-93}), 
real-world geometric graphs often contain edge crossings.
Recent experimental studies by the first two authors gives
empirical evidence that real-world road networks typically have
$\Theta(\sqrt{n})$ edge crossings, where $n$ is the number of
vertices~\cite{eg-snprntaal}.
Motivated by this real-world example, therefore, we are interested in
studying algorithms for connected geometric graphs that
have a sublinear number of edge crossings. However, we use a weaker restriction on the number of crossings than the bounds that our evidence suggests for road networks: here we are interested in $n$-vertex geometric graphs that
have at most $O(n/\log^{(c)} n)$ edge crossings, for some constant
$c$, where $\log^{(c)} n$ denotes the $c$-th iterated logarithm
function.
We refer to such geometric graphs as \emph{restrained} graphs.

Given an $n$-vertex geometric graph $G$, 
the \emph{planarization}\footnote{Our use of this term differs
	from its use in the graph drawing literature 
	(e.g., see~\cite{dett-gd-99}), where it refers to 
	the problem of removing a minimal number of 
	edges to make $G$ be planar.} 
of $G$ is the
graph $G'$ that is defined by the arrangement of the edges in $G$.
That is, as shown in Figure~\ref{fig:planarization}, we place a vertex in $G'$ for every vertex and pairwise edge
crossing in $G$, and we create an edge in $G'$ for every maximal edge segment
from $G$ that connects exactly two vertices in $G'$. Likewise, we
preserve the (clockwise/counterclockwise) ordering 
of edges around corresponding vertices 
in $G$ and $G'$, and we assume that intersection vertices in $G'$
similarly have their edges given in rotational order.
Thus, $G'$ is a plane graph having $n+k$ vertices,
where $k$ is the number of pairwise edge crossings among the edges in $G$.
By well-known properties of planar graphs (e.g., see~\cite[Prop.~2.1.6]{mt-gos-01}),
this implies that $G'$ has at most $3n+3k-6$ edges,
which in turn implies that $G$ has at most $3n+k-6$ edges.
Therefore, by restricting our attention to connected
geometric graphs with a sublinear number of edge crossings, we are,
by implication, focusing on connected geometric graphs that have $O(n)$ edges in their planarizations.

As mentioned above, a wealth of 
algorithms are known for planar graphs and plane graphs.
Indeed, many of these algorithms, for such problems as single-source
shortest paths and minimum spanning trees, run in $O(n)$ time.
Much less is known for non-planar geometric graphs, however, which
motivates our interest in such graphs in this paper.
Specifically, we are interested in the following problems for
connected, restrained geometric graphs:

\begin{itemize}
\item
The \emph{Voronoi diagram} problem,
which is also known as the \emph{post office} problem:
we are given a set $P$ of $k$ vertices in a geometric graph $G$
and asked to determine for every other vertex $v$ in $G$ the 
vertex in $P$ that is closest to $v$ according to the graph metric.
\item
The \emph{single-source shortest path} problem: we are given a
vertex $s$ and a geometric graph $G$ and asked to find the shortest
paths from $s$ to every other vertex in $G$.
\item
The \emph{polygon planarization} problem: given a geometric 
graph defining a non-simple polygon
$P$ having $n$ vertices, compute the arrangement of all the edges of
$P$, including vertices defined by the pairwise crossings 
of the edges in $P$.
\end{itemize}
In all these cases, we desire comparison-based
algorithms that require no additional 
assumptions regarding the distribution of edge weights, so that our algorithms
can apply to a wide variety of possible edge weights that may vary for different users, including
combinations of distance, 
travel time, toll charges, and subjective scores rating safety
and scenic interest~\cite{Epp-SJC-03}.

\subsection{Previous Related Work}
In the algorithms community, there has 
been considerable prior work on shortest
path algorithms for Euclidean graphs 
(e.g., see~\cite{gh-cspasm-05,hsww-csuts-05,%
kp-sppls-06,ss-hhhes-05,sv-speg-86,zn-spaeu-98}),
which are geometric graphs where edges are weighted by the
lengths of the corresponding line segments.
This prior work takes a decidedly
different approach than we take in this paper, however, in that it focuses on using special properties of the edge weights
that do not hold in the comparison model,
whereas we study road networks as geometric graphs with a sublinear 
number of edge crossings and we desire
linear-time algorithms that hold in the comparison model.

The specific problems for which we provide linear-time algorithms are
well known in the general algorithms and computational geometry
literatures.
For general graphs with $n$ vertices and $m$ edges, 
excellent work can be found on 
efficient algorithms in the comparison model,
including single-source
shortest paths~\cite{clrs-ia-01,gt-adfai-02,r-rrsss-97}, 
which can be found in $O(n\log n + m)$
time~\cite{ft-fhtui-87}, and
Voronoi diagrams~\cite{a-vdsfg-91,ak-vd-00},
whose graph-theoretic version
can be constructed in $O(n\log n + m)$ time~\cite{e-tgvda-00,m-afaas-88}. 
None of these algorithms run in linear time, even for
planar graphs. Linear-time algorithms for planar graphs
are known for single-source shortest paths~\cite{hkrs-fspap-97},
but these unfortunately do not immediately
translate into linear-time algorithms for non-planar geometric graphs.
In addition, there are a number of efficient
shortest-path algorithms that make assumptions about 
edge weights~\cite{g-saspp-93,gh-cspasm-05,m-ssspa-01,t-usspp-99};
hence, they are not applicable in the comparison model.

Chazelle~\cite{c-tsplt-91a} shows that any simple polygon can be triangulated
in $O(n)$ time and that this algorithm can be extended to
determine in $O(n)$ time, for any polygonal chain $P$, 
whether or not $P$ contains a self-intersection.
In addition, Chazelle posed as an open problem whether or not one can
compute the arrangement of a non-simple polygon in $O(n+k)$ time,
where $k$ is the number of pairwise edge crossings.
Clarkson, Cole, and Tarjan~\cite{cct-rpatd-92,cct-erpat-92} answer this question in the
affirmative for polygons with a super-linear number of crossings, as
they give a randomized algorithm that solves this problem in 
$O(n\log^* n + k)$ expected time.
There is, to our knowledge, no previous 
algorithm that solves Chazelle's open
problem, however, for non-simple polygons with a sublinear
number of edge crossings.

\subsection{Our Results}
In this paper, we 
provide the first linear-time algorithm for planarizing a non-planar
connected geometric graph having a number of pairwise edge crossings, $k$,
that is sublinear in the number of vertices, $n$, 
by an iterated logarithmic factor.
Specifically, we provide a randomized algorithm for planarizing 
geometric graphs in $O(n + k\log^{(c)} n)$ expected time, which is linear for
restrained geometric graphs.
Given such a planarization, we show how it can be used to help
construct an $O(\sqrt{n})$-separator decomposition of the original
graph in $O(n)$ time.
Furthermore, we discuss how such separator decompositions can then be
used to produce linear-time algorithms for a number of problems,
including Voronoi diagrams and single-source shortest paths.
We also show how our planarization algorithm can be used to solve
Chazelle's open problem of planarizing non-simple polygons in 
expected linear time
for polygons having a number of pairwise edge crossings that is
sublinear in $n$ by an iterated logarithmic factor.
Thus, combining this result with the polygon planarization
algorithm of Clarkson, Cole, and Tarjan~\cite{cct-rpatd-92,cct-erpat-92} provides a method for
planarizing an $n$-vertex polygon with $k$ edge crossings in
optimal $O(n + k)$ expected time, for all 
values of $k$ except those in the range $[n/\log^{(c)}n, n\log^* n]$.
Our result also implies that the convex hull of restrained non-simple
polygons can be
constructed in $O(n)$ expected time, which, to the best of our knowledge, was
also previously open.

Besides planar separator decompositions, which we discuss below,
another one of the techniques we use in this paper is a method for
constructing a $(1/r)$-cutting for the edges of a geometric graph,
$G$.
This is a proper triangulation\footnote{%
	A \emph{proper triangulation} is a 
	connected planar geometric graph such that every
	face is a triangle and every triangular face has exactly three
	vertices on its boundary.},
$T$, of the interior of 
the bounding box containing $G$ such
that any triangle $t$ in $T$ intersects at most $(1/r)n$
edges of $G$.
Using existing methods (e.g., see~\cite{a-gpia-91i,bs-ca-95,h-cpctp-00}), one can construct such
a $(1/r)$-cutting for $G$ in $O(n\log r + (r/n)k)$ time, where $n$ is
the number of vertices in $G$ and $k$ is the number of pairwise edge crossings. However, in our application such a bound would be nonlinear, as we require $r$ to be large.
We show, in Section~\ref{sec-cuttings}, that for connected geometric graphs such a cutting can be constructed in the faster expected time bound
$O(ns + (r/n)k)$, where $r\le n/\log^{(s)} n$.

%----------------------------------
\section{Separator Decompositions}

\begin{figure*}[htb]
\centering\includegraphics[width=4.5in]{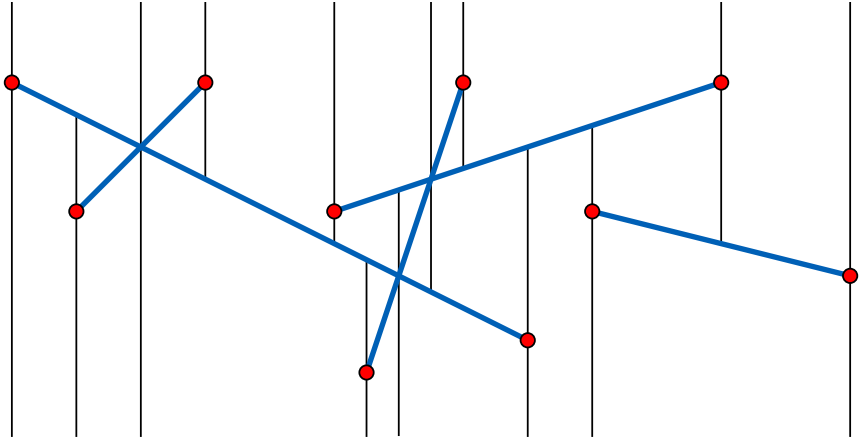}
\caption{Trapezoidal decomposition of a sampled subset of input graph edges.}
\label{fig:trapezoidalization}
\end{figure*}

One of the main ingredients
we use in our algorithms is the existence of small separators in
certain graph families (e.g., see~\cite{lt-stpg-79,m-fsscs-86}).
Several of the algorithms in this paper are based on the use of separators: we use them both as part of our algorithm for finding cuttings of geometric graphs, and later, once the graph has been planarized. Hence, we briefly review these tools here.

Given a graph $G=(V,E)$, a 
subset $W$ of $V$ is an \emph{$f(n)$-separator} if the removal of the
vertices in $W$ separates $G$ into two subgraphs $G_1$ and $G_2$,
each containing at most $\delta n$ vertices, for some constant
$0<\delta < 1$.
It is well known that planar graphs have $O(\sqrt{n})$-separators
with $\delta=2/3$, and that such separators can be constructed in
$O(n)$ time~\cite{lt-stpg-79}.
Such separators are typically used in divide-and-conquer algorithms,
which involve finding a separator, recursively solving the problem in
the two separated subgraphs, and then merging the solutions together.
If the merge and divide steps can be solved in $o(n)$ time, however,
it is useful to have the entire recursive separator decomposition
computed in advance; for otherwise there is no way to beat an
$O(n\log n)$ time bound. 
Such a \emph{separator decomposition} defines a binary tree $B$,
such that the root of $B$ is associated with the $f(n)$-separator for
$G$ and the subtrees of this root are defined recursively for the
graphs $G_1$ and $G_2$, respectively.

Previous work on separators includes the seminal contribution of
Lipton and Tarjan~\cite{lt-stpg-79}, who show that $O(\sqrt{n})$-sized
separators exist for $n$-vertex planar graphs and these can be computed
in $O(n)$ time.
Goodrich~\cite{g-psppt-95} shows that recursive
$O(\sqrt{n})$-separator decompositions can be constructed for planar
graphs in $O(n)$ time.
A related concept is that of geometric separators, which use
geometric objects to define separators in graphs defined by systems of intersecting disks
(e.g., see~\cite{ar-dsa-93,mttv-gsfem-95,mttv-sspnn-97,st-dpps-96}).
Eppstein
{\it et al.}~\cite{emt-dltag-93} provide a linear-time construction
algorithm for geometric separators which translates into an $O(n\log n)$ recursive separator
decomposition algorithm.

Because restrained graphs are not planar, the result of Goodrich does not immediately apply. However, it can be applied once we have planarized the graph, and it can also be applied to planar structures formed from subsets of the graph, such as the one we describe in the next section.

%----------------------------------
\section{Trapezoidal Decomposition of a Sample}

Suppose we are given a geometric graph $G$ having $n$ vertices
and $k$ pairwise intersections among its edges.
In this section, we describe our algorithm for constructing a
trapezoidal decomposition of a random sample of the edges of $G$.
That is, given the sample of edges, we construct the arrangement of these edges together with a set of vertical line segments through each edge endpoint and crossing, where each such segment is maximal with respect to the property of not crossing any other sampled edge, as shown below.
(See Figure~\ref{fig:trapezoidalization}.)
Our method is parameterized by $s$ where $r\le n/\log^{(s)} n$, and the sample probability is inversely proportional to $\log^{(s)} n$. We will later 
show how to refine this sample so that we can
produce a cutting and then a planarization of $G$.

This first step of
our algorithm is essentially the same as performing $s$ levels of the
Clarkson, Cole, and Tarjan algorithm, except that their
method is for polygonal chains, whereas ours is for geometric graphs. Thus, we describe it at a high level.

Our algorithm begins with a trivial trapezoidal decomposition $T_0$ containing a single trapezoid
that encloses all of $G$. Call this trapezoid $t$. Let $C(t)=E$ be the \emph{conflict list} for $t$, that is,
the set of edges from $G$ that intersect the interior of $t$.
Then, for $i=1$ to $s$, we perform the following computation.

\begin{enumerate}
\item
Find a random sample $S_i$ of size $n/\log^{(i)} n$, 
of the edges in $G$,
and for each trapezoid $t$ in $T_{i-1}$, use the Bentley-Ottmann 
algorithm~\cite{bo-arcgi-79} to
construct the trapezoidal decomposition of the arrangement of the segments in
$C(t)\cap S_i$. Once all these trapezoidal decompositions are constructed,
merge them together to create a single trapezoidal decomposition, $T_i$, for
the segments in $S_i$. To be consistent with Clarkson, Cole, and Tarjan, we choose the samples such that $S_1 \subset S_{2} \subset \cdots \subset S_s$.
\item
Perform a depth-first traversal of $G$, while keeping track of the trapezoids in
the trapezoidal decomposition that are intersected during the 
walk, so as to determine, for each trapezoid $t$ in $T_i$, the set $C(t)$. Since the geometric graph is connected, we never have to restart the depth-first traversal from a node whose location we do not already know. We can therefore use the arrangement of the sampled line segments to keep track of the intersected trapezoids at each step of the traversal. Thus we eliminate the need for time-consuming point-location data structure lookups.
\end{enumerate}

Let $T=T_s$ be the resulting final trapezoidal decomposition we get
from this computation, and let $S=S_s$ be the final random sample.
Using the framework established by Clarkson and Shor~\cite{cs-arscg-89} for
randomized divide-and-conquer algorithms, such as this, we can show
that
\begin{equation}
\label{eq-1}
E\left( |T| \right) = O\left(r+\left(\frac{r}{n}\right)^2 k\right)
\end{equation}
and
\begin{equation}
\label{eq-2}
E\left( \sum_{t\in T} |C(t)| \right) = 
O\left( n+\left( \frac{r}{n}\right) k\right).
\end{equation}
In particular, 
Equation~(\ref{eq-1}) is from their Lemma~4.1
and
Equation~(\ref{eq-2}) follows from their Corollary~4.4.  The number of steps in the depth-first traversal is proportional to the total size of the conflict lists of the input geometric graph with the trapezoidal decomposition, which as we have seen above is small. A step from one trapezoid to a horizontally adjacent trapezoid may be accomplished in constant time, but a single trapezoid may have a non-constant number of neighbors above and below it, causing steps in those directions to take longer. But as Clarkson, Cole, and Tarjan show, the sum over all trapezoids of the conflict list size of the trapezoid multiplied by its number of neighbors remains linear in expectation, and this sum bounds the time to step vertically from one trapezoid to another using a sequential search along the trapezoid boundary to find the neighboring trapezoid. Therefore, we have the following preliminary result:

\begin{lemma}
Given a connected geometric graph $G$ with $n$ edges and $k$ pairwise edge crossings, and a
parameter $s$, we can in expected 
time $O(ns+(r/n)k)$ find a random sample of $r=O(n/\log^{(s)} n)$ 
edges from $G$, the trapezoidal decomposition induced by the sample,
and the set of edges of $G$ crossing each trapezoid of the sample.
\end{lemma}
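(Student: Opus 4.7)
My plan is to verify that the iterative procedure just described produces the three outputs requested by the lemma and runs in the claimed expected time. The outputs come essentially for free from the construction: after the final iteration $i=s$, the sample $S=S_s$ has size $r=O(n/\log^{(s)} n)$, and the algorithm explicitly maintains both the trapezoidal decomposition $T=T_s$ and, via the depth-first traversal of step~2, the conflict list $C(t)$ for every $t\in T$. All the work is in bounding the expected running time.

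First I would account for the work of step~1 across all $s$ stages. Inside each trapezoid $t$ of $T_{i-1}$, the expected number of edges from $S_i$ that must be swept is $|C(t)|\cdot r_i/n$, and the expected number of output intersections inside $t$ is bounded by a term of the shape appearing in Equation~(\ref{eq-1}) applied to $C(t)$. Summing over $t$ and invoking Equation~(\ref{eq-2}) at level $i-1$, the expected total Bentley-Ottmann cost at stage $i$ is $O\bigl(r_i + (r_i/n)(n + (r_{i-1}/n)k)\bigr)$. After substituting $r_i = n/\log^{(i)} n$, the ratios $r_i/r_{i-1}$ collapse according to the iterated logarithm, and summing $i=1,\ldots,s$ telescopes into the target bound $O(ns + (r/n)k)$. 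The logarithmic factor of Bentley-Ottmann is absorbed because $\log\bigl(|C(t)|\cdot r_i/n\bigr)$ is dominated by the geometric decay built into the sampling schedule.

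Next I would handle step~2, the DFS. Because $G$ is connected, the traversal visits every edge starting from a vertex whose containing trapezoid is already known, so locating each successive endpoint reduces to a local walk in the current decomposition rather than a global point-location query. Charging each visit to an (edge, trapezoid) incidence bounds the horizontal walking cost by $\sum_{t\in T_i}|C(t)| = O(n + (r/n)k)$ in expectation via Equation~(\ref{eq-2}). The one subtle piece of work is stepping vertically across a trapezoid whose top or bottom boundary is shared with many neighbors; here I would import the Clarkson-Cole-Tarjan observation that the expected sum, over all trapezoids, of conflict-list size times neighbor count is linear, so that the sequential scans along shared boundaries are paid for in aggregate.

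The main obstacle I anticipate is making the telescoping argument fully rigorous. The sampling schedule $r_i = n/\log^{(i)} n$ is tuned so that each stage looks like an essentially linear pass, but both the Bentley-Ottmann sweep cost and the Clarkson-Shor expectations involve ratios $r_i/n$ and $r_i/r_{i-1}$ that must be tracked carefully across $s$ levels to guarantee the additive $ns$ term rather than, say, $n\log^{(s-1)} n$. A level-by-level induction based on Equations~(\ref{eq-1}) and~(\ref{eq-2}), together with the Clarkson-Cole-Tarjan framework (which we are only adapting from polygonal chains to geometric graphs, with the connected-DFS idea replacing their point-location subroutine), should make this clean.
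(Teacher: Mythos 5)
Your proposal is correct and follows essentially the same approach as the paper: $s$ levels of Clarkson--Cole--Tarjan-style sampling with a per-trapezoid Bentley--Ottmann sweep, a DFS traversal of the connected graph to recompute conflict lists without point location, and the Clarkson--Shor and Clarkson--Cole--Tarjan bounds (Equations~(\ref{eq-1}) and~(\ref{eq-2}) and the conflict-list-times-neighbor-count sum) to charge the horizontal and vertical stepping costs. You even flag and correctly resolve the main subtlety the paper leaves implicit, namely that the per-level sweep costs telescope under the $r_i = n/\log^{(i)} n$ schedule to give $O(ns)$ rather than $O(n\log^{(s-1)} n)$.
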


%----------------------------------
\section{Cuttings}
\label{sec-cuttings}

At this stage we take a detour from the Clarkson, Cole, and Tarjan algorithm.
For each trapezoid $g$ in $T$, let $\alpha_g=|C(g)|r/n$.
That is, $\alpha_g$ is the degree of excess that the conflict list
for $g$ has beyond what we would like for a $(1/r)$-cutting.
For each trapezoid $t$ with
$\alpha_t>1$, we form a random sample, $R_t$, of $C(t)$ of size 
$2b\alpha_t \log \alpha_t$, where $b$ is the constant $K_{\max}$ from 
Corollary 4.4 of Clarkson-Shor~\cite{cs-arscg-89}.
We then form the trapezoidal decomposition, $T_t$ of 
the arrangement of the segments in $R_t$ using 
any quadratic-time line segment arrangement algorithm~\cite{am-dplr-91,bdsty-arsol-92,ce-oails-92,ejps-pcoas-91}.
Thus, by Corollary 4.4 from Clarkson-Shor~\cite{cs-arscg-89},
the maximum size of any conflict list
of a trapezoid in $T_t$ is expected to be 
less than
\begin{eqnarray*}
\left(\frac{|C(t)|}{|R_t|}\right)\log |R_t| &=& 
\left(\frac{n}{r}\right)
      \left(\frac{1}{\log \alpha_t^2}\right)\log (2\alpha_t\log
      \alpha_t) \\
      &\le& \frac{n}{r},
\end{eqnarray*}
for $\alpha_t\ge 4$.
Thus, we can repeat the above algorithm an expected constant number
of times until we have this condition satisfied, which gives us one
of the crucial properties of a $(1/r)$-cutting: namely, that each
cell intersects at most $(n/r)$ edges of $G$.

In addition,
the number of new trapezoids created inside $t$, as well as the
running time for creating the trapezoidal diagram $T_t$, is certainly at most
$O(|R_t|^2)$, which is
$O(\alpha_t^2 \log^2 \alpha_t)$.
More importantly, we have the following:

\begin{lemma}
Given the above construction applied to each trapezoid $t$ in $T$,
then
\[
E\left( \sum_{t\in T} \alpha_t^2 \log^2 \alpha_t\right) 
= O\left(r+\left(\frac{r}{n}\right)^2 k\right).
\]
\end{lemma}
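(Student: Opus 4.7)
The plan is to invoke the higher-moment form of the Clarkson--Shor sampling theorem, the same framework that already supplies Equations~(\ref{eq-1}) and~(\ref{eq-2}). Trapezoids in $T$ are configurations of constant size (each one is cut out by at most four sampled segments), so for every constant $c$ their moment theorem gives
\begin{equation*}
E\!\left(\sum_{t\in T} |C(t)|^c\right) \;=\; O\!\left(\left(\tfrac{n}{r}\right)^{c} E(|T|)\right),
\end{equation*}
and Equation~(\ref{eq-1}) provides the value $E(|T|)=O(r+(r/n)^{2} k)$ to plug into the right-hand side.

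Next I would reduce $\alpha_t^{2}\log^{2}\alpha_t$ to a pure power of $\alpha_t$. Split the sum at $\alpha_t = 4$. Terms with $\alpha_t < 4$ contribute $O(1)$ each, so they sum in expectation to $O(E(|T|))$, which is already within the claimed bound. For $\alpha_t \ge 4$, the elementary inequality $\log^{2} x = O(x)$ yields $\alpha_t^{2}\log^{2}\alpha_t = O(\alpha_t^{3}) = O\bigl((r/n)^{3}|C(t)|^{3}\bigr)$. Applying the moment bound with $c = 3$ then gives
\begin{equation*}
E\!\left(\sum_{t:\,\alpha_t\ge 4}\alpha_t^{2}\log^{2}\alpha_t\right)
\;=\; O\!\left(\tfrac{r^{3}}{n^{3}}\cdot\tfrac{n^{3}}{r^{3}}\,E(|T|)\right)
\;=\; O(E(|T|))
\;=\; O\!\left(r+\tfrac{r^{2}}{n^{2}}k\right),
\end{equation*}
as desired.

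The only place that needs real care is the justification of the higher-moment theorem in our setting. Clarkson--Shor is usually stated for configuration spaces of bounded local complexity, and one must confirm that edge crossings do not break this: a trapezoid in $T$ is still defined by at most four sampled segments (its top, bottom, and the segments whose endpoints or pairwise crossings support its two vertical walls), while the influence of unsampled crossings has already been absorbed into the bound $E(|T|)=O(r+(r/n)^{2}k)$. Once that observation is in place, the lemma follows as a one-line corollary of the $c=3$ moment bound combined with Equation~(\ref{eq-1}), with no new probabilistic argument required.
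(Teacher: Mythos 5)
Your proof is correct and uses the same Clarkson--Shor machinery and the same exponent $c=3$ as the paper, but via a cleaner reduction. The paper's proof works directly with the concave-function form of Theorem~3.6: it constructs an explicit concave $W$ (involving $x^{1/3}$ and a shifted logarithm) so that $\alpha_t^2\log^2\alpha_t \le W\bigl(\binom{|C(t)|}{3}\bigr)$, and then applies the theorem. You instead split the sum at $\alpha_t=4$, note that the small terms contribute $O(1)$ apiece and hence $O(E(|T|))$ in total, and for the large terms use the elementary bound $\log^2 x = O(x)$ for $x\ge 4$ to replace $\alpha_t^2\log^2\alpha_t$ by $O(\alpha_t^3)$, after which the pure third-moment bound $E\bigl(\sum_t |C(t)|^3\bigr) = O((n/r)^3\,E(|T|))$ finishes the job. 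This avoids the need to exhibit and verify concavity of a bespoke $W$, and is arguably the more transparent argument; the small price is that it throws away the $\log^2$ savings, but since the resulting bound is $O(E(|T|))$ either way this costs nothing. Both proofs must still verify the local-complexity condition that licenses the moment bound (that the number of trapezoids with conflict size at most $c$ is proportional to $|T|$); the paper does this explicitly with the vertical-edge-extension observation, and you flag the same issue at the end, correctly identifying it as the only point needing care.
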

\begin{proof}
Our proof is based on an application of
Theorem~3.6 from the Clarkson-Shor framework. 
To apply this theorem, we bound
\[
E\left( \sum_{t\in T} \alpha_t^2 \log^2 \alpha_t\right)
\]
by bounding the term,
$\alpha_t^2 \log^2 \alpha_t$,
by 
\[
 W\left({{|C(t)|} \choose c }\right), 
\]
where $W$ is a positive concave function on ${\bf R}^+$ and $c$ is a constant.
Here, for the sake of an upper bound, we take $c=3$ and we define
\[
W(x) = \left(\frac{x^{1/3}}{N}\right)^2 \log^2 \frac{x^{1/3}+N}{N} ,
\]
where $N=n/r$.
Finally, to apply Theorem~3.6 from \cite{cs-arscg-89}, 
we need to observe that the number of trapezoids in $T$ that have a conflict list size
at most $c$ is proportional to the number
of trapezoids in $T$ that have a conflict list size at least $0$, which is $|T|$.
To see this, note that we can extend the vertical edges of any trapezoid in
$T$ in at most $O(1)$ ways until it hits $i=1,2,3$ other edges of the
random sample, $S$,
at which point we can extend this trapezoid horizontally in $O(1)$
ways until we hit $3$ segments in total.
Therefore, 
by Theorem~3.6 from \cite{cs-arscg-89}, 
\[
E\left( \sum_{t\in T} \alpha_t^2 \log^2 \alpha_t\right)
\]
is 
\[
O\left(r+\left(\frac{r}{n}\right)^2 k\right).
\]
\end{proof}

Thus, our refined 
trapezoidal decomposition, $T'$, will 
have size proportional to $|T|$.
It is still not quite a $(1/r)$-cutting, however, as it is not a
proper triangulation. Indeed, some trapezoids may
have many more than $4$ vertices on their boundaries 
(see Figure~\ref{fig: highdegtrap}).

\begin{figure}[hbt]
\centering\includegraphics[width=3in]{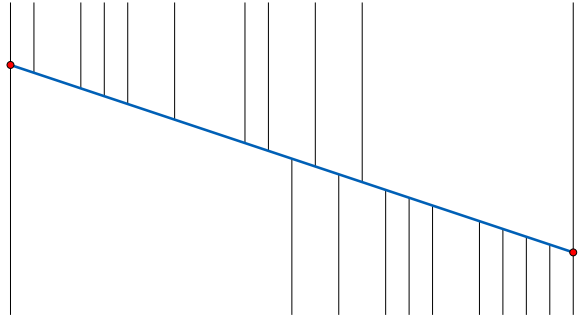}
\caption{Many trapezoids may be adjacent to another trapezoid along its top or bottom edges.}
\label{fig: highdegtrap}
\end{figure}

To refine $T'$ into a proper triangulation, we borrow an idea 
from the fractional cascading framework of Chazelle and
Guibas~\cite{cg-fc1ds-86} to first refine $T'$ into
a trapezoidal decomposition such that each trapezoid has $O(1)$
vertices on its boundary, while keeping the total number of trapezoids to be
$O(|T'|)$, which is expected to be 
\[
 O\left(r+\left(\frac{r}{n}\right)^2 k\right).
\]
By triangulating the interior of each 
such trapezoid, we will get a $(1/r)$-cutting whose size is still 
$O(|T'|)$.
(See Figure~\ref{fig:cascade}.)

\begin{figure}[hbt]
\centering\includegraphics[width=3in]{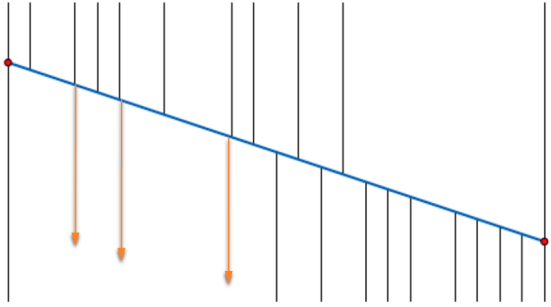}
\caption{The cascading of trapezoidal rays.}
\label{fig:cascade}
\end{figure}

Construct the graph-theoretic planar dual $U$ to $T'$, and note that
we can direct the edges of $U$ so as to define four directed-acyclic
graphs, which respectively define the partial orders ``below,''
``above,'' ``left-of,'' and ``right-of'' among the trapezoids.
Without loss of generality,
let us direct $U$ according to the ``below''
relation, perform a topological sort, and process the trapezoids of
$T'$ from top to bottom according to this ordering.
When processing a trapezoid, $t$, we assume inductively that we have
determined the ordered list of vertices $V_t=(v_1,v_2,\ldots,v_j)$ on
$t$'s upper edge, which are bottom vertices of trapezoids above $t$.
To process $t$ we choose every other vertex, $v_{2i}$,
in $V_t$ and extend a vertical segment from $v_{2i}$ to the bottom of
$t$ to split $t$ in two for each such $v_{2i}$.
Doing this for every other vertex in $V_t$, therefore, splits $t$ and
increases the number of trapezoids by $\lfloor |V_t|/2\rfloor$.
We then repeat this computation by considering the new set of
trapezoids according to the ``above'' relation, from bottom to top.
Next, we do a similar computation for the ``left-of'' and
``right-of'' relations (except that now we extend segments parallel
to the top or bottom edges of our trapezoid in a way that partitions
its interior into non-crossing trapezoids).
When we have completed this last scan of the trapezoids, we will have
created a trapezoidal decomposition such that each trapezoid has
$O(1)$ vertices on its edges.
More importantly, we also have the following:

\begin{lemma}
The total number of trapezoids created by the above refinement
process is $O(|T|)$, which has expected value $O(r+(r/n)^2k)$.
\end{lemma}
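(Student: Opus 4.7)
The plan is to show via a fractional-cascading-style halving argument that each of the four cascading passes (below, above, left-of, and right-of) adds only $O(|T'|)$ new trapezoids to the subdivision. Combined with the preceding lemma, which gives $|T'| = O(|T|)$ with expected value $O(r + (r/n)^2 k)$, this yields the claim.

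The key invariant I would maintain is that, at the start of each pass, $\sum_t \nu(t) = O(|T'|)$, where $\nu(t)$ is the number of vertices lying on the boundary of trapezoid $t$ in the current subdivision. For the starting decomposition $T'$, every vertex is either the endpoint of a sampled segment, a pairwise crossing, or a T-junction where a vertical ray meets a non-vertical segment; each such vertex has degree at least $3$. Euler's formula then gives $|E'| \le 3|T'| - O(1)$ for the edge count, and since $\sum_t \nu(t) = 2|E'|$, the base case of the invariant holds.

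For the inductive step, consider the ``below'' pass. Processing a trapezoid $t$ with $|V_t|$ vertices on its top edge inserts $\lfloor |V_t|/2 \rfloor$ vertical cuts, creating $\lfloor |V_t|/2 \rfloor$ new trapezoids. Summing over trapezoids, the total number of new trapezoids produced in one pass is at most
\[
\sum_t \lfloor |V_t|/2 \rfloor \;\le\; \tfrac{1}{2}\sum_t \nu(t) \;=\; O(|T'|).
\]
Each newly inserted cut introduces only degree-$3$ T-junctions at its endpoints, preserving the minimum-degree-$3$ property, so Euler's formula can be reapplied after the pass to re-establish the invariant $\sum_t \nu(t) = O(|T'|)$ before the next pass begins. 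The analysis for the ``above,'' ``left-of,'' and ``right-of'' passes is symmetric.

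The main subtlety I anticipate is ensuring that halving truly controls the cumulative effect across four passes, since (for example) the new vertices placed on bottom edges during the ``below'' pass are exactly what drives the ``above'' pass. The resolution is that the invariant is rederived after each pass from purely structural grounds (current face count and minimum vertex degree), independent of the particular vertex locations; so the $\tfrac{1}{2}$ factor recurs afresh in each pass and a constant number of passes preserves the overall bound. The final count is therefore $O(|T'|) = O(|T|)$, with expected value $O(r + (r/n)^2 k)$ by the preceding lemma.
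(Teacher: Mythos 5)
Your argument has a real gap, and it is precisely where the halving actually matters. The sum $\sum_t \lfloor |V_t|/2\rfloor$ that you bound by $\tfrac12\sum_t\nu(t)$ tacitly treats $|V_t|$ as a quantity fixed at the start of the pass; but the passes are \emph{cascading} within themselves. When the ``below'' pass extends a vertical cut through a trapezoid $t$, the bottom endpoint of that cut is a brand-new vertex on the top edge of the trapezoid $t'$ beneath $t$, and since processing proceeds top to bottom, $t'$ has not been processed yet. So $V_{t'}$, at the moment $t'$ is processed, contains new vertices created earlier in this same pass, and $|V_{t'}|$ can exceed the boundary count $\nu(t')$ that your invariant controls. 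You flag the ``across the four passes'' interaction as the subtlety and argue the invariant is re-derived between passes, but the dangerous interaction is \emph{within} a single pass, and re-deriving the invariant between passes does nothing to control it. A symptom of the gap is that your argument as written never actually uses the halving in a nontrivial way: replacing ``every other vertex'' with ``every vertex'' only changes the trivial inequality $\lfloor x/2\rfloor\le x/2$ to $x\le x$, and your chain of inequalities would go through unchanged, yet with no halving the number of cuts in a top-to-bottom chain of trapezoids can grow as the product of the chain length and the original vertex count rather than staying linear. The argument is salvageable: note that each cascaded vertex in $\bigcup_t V_t$ is the lower endpoint of exactly one cut made earlier in the pass, so if $X=\sum_t\lfloor|V_t|/2\rfloor$ is the number of cuts then $\sum_t|V_t|\le\sum_t\nu(t)+X$, giving $X\le\tfrac12\bigl(\sum_t\nu(t)+X\bigr)$ and hence $X\le\sum_t\nu(t)=O(|T'|)$; here the factor $\tfrac12$ is finally doing real work. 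The paper takes a different route and sidesteps the self-referential inequality entirely with a local charging argument: every vertical edge of the decomposition (original or newly created) is seeded with two dollars, each new cut through $t$ is paid for by one dollar each from the vertical edges terminating at $v_{2i}$ and $v_{2i-1}$, and since each vertical edge has a single lower endpoint it pays at most once, so the total money bounds the number of vertical edges (and hence trapezoids) at termination. Both approaches hinge on the same structural fact---each vertical edge is charged, or each cascaded vertex is created, exactly once---but the paper's per-edge accounting absorbs the within-pass cascading automatically, whereas your global Euler-formula invariant does not and must be supplemented by the self-referential step above.
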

\begin{proof}
We have already established that
$E(|T|)$ is $O(r+(r/n)^2k)$ and that 
$E(|T'|)$ is $O(E(|T|))$.
So we have yet to show that the number of new trapezoids created
during any of our splitting processes is $O(|T'|)$.
We do this by an accounting argument.
Without loss of generality
, consider the processing according to the ``below'' relation.
Assume, for the sake of our analysis, that, at the beginning
of our computation, we give each vertical edge in our trapezoidal 
decomposition \$2 and we require every vertical edge at the end of
the process to have at least \$1.
When we extend a vertical ray from an even numbered vertex
$v_{2i}$ at the top of a trapezoid, $t$ we can assume inductively that
the vertical edge above $v_{2i}$ has \$2, as does the vertical edge
directly to the left of this edge (which hits $t$ at vertex
$v_{2i-1}$).
Let us take \$1 from this vertical edge and from the one that hits
$t$ at $v_{2i}$, which leaves \$1 at each of those edges,
and use the \$2 to pay for the new vertical edge that we then extend
through $t$.
Therefore, since the two vertical edges we just took money from will
not be processed again, we can process each trapezoid and pay for
every action, while keeping \$1 for each trapezoid in our refined
trapezoidal decomposition.
Repeating this accounting argument for the ``above,'' ``left-of,''
and ``right-of'' relations completes the proof.
\end{proof}

Given a trapezoidal diagram having $O(1)$ vertices on the boundary of
each trapezoid, and each trapezoid intersecting at most $(n/r)$ edges
of our geometric graph $G$ we can easily triangulate 
each trapezoidal face in this diagram to turn it into a
$(1/r)$-cutting with a number of triangles that is proportional to
the number of trapezoids.
(See Figure~\ref{fig:trias}.)

\begin{figure}[hbt]
\centering\includegraphics[width=3in]{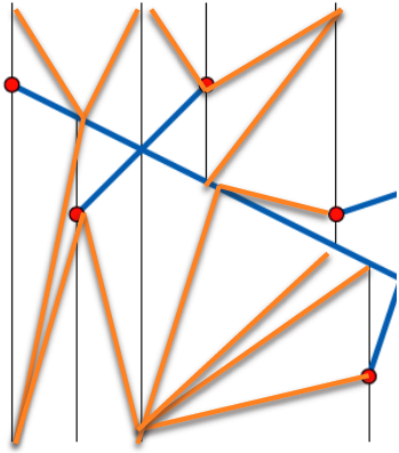}
\caption{The triangulation step.}
\label{fig:trias}
\end{figure}

Thus, putting all the pieces together, we get the following.

\begin{theorem}
\label{thm-cutting}
Given a connected geometric graph $G$ having $n$ vertices and $k$ pairwise edge
crossings, one can construct a $(1/r)$-cutting for
the edges of $G$ of expected size $O(r+(r/n)^2k)$ in expected time 
$O(ns+(r/n)k)$, for $r\le n/\log^{(s)} n$.
\end{theorem}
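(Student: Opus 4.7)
The plan is to assemble the pieces developed in this section. First, I invoke the preliminary lemma on the trapezoidal decomposition of a random sample to obtain, in expected time $O(ns+(r/n)k)$, a sample $S$ of size $r=O(n/\log^{(s)}n)$, its trapezoidal decomposition $T$, and the conflict list $C(t)$ for every trapezoid $t\in T$. Equations~(\ref{eq-1}) and (\ref{eq-2}) give the size bounds $E(|T|)=O(r+(r/n)^2 k)$ and $E(\sum_t|C(t)|)=O(n+(r/n)k)$ that will be needed later.

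Next, I carry out the Clarkson--Shor refinement trapezoid by trapezoid: for every $t$ with excess $\alpha_t=|C(t)|r/n>1$, I repeatedly sample $R_t\subseteq C(t)$ of size $2b\alpha_t\log\alpha_t$ and build the trapezoidal arrangement $T_t$ of $R_t$ in $O(|R_t|^2)=O(\alpha_t^2\log^2\alpha_t)$ time using any quadratic-time algorithm, until the maximum conflict-list size inside $t$ is at most $n/r$. The calculation preceding Lemma~2 shows that a single trial succeeds in expectation for $\alpha_t\ge 4$, so an expected constant number of trials suffices for each $t$. Summing the work per trapezoid and invoking Lemma~2 gives an expected bound of $O(r+(r/n)^2 k)$ for both the total number of newly created trapezoids and the total time spent on this refinement step.

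At this point I have a trapezoidal decomposition $T'$ of expected size $O(r+(r/n)^2 k)$ in which every cell is crossed by at most $n/r$ edges of $G$, but some cells may carry many vertices on their horizontal boundaries. I apply the fractional-cascading-style splitting of Lemma~3 in the four canonical directions to reduce every trapezoid to $O(1)$ boundary vertices, and Lemma~3 guarantees this only increases the trapezoid count by a constant factor, so the bound $O(r+(r/n)^2 k)$ is preserved. A final pass triangulates each bounded-boundary trapezoid in constant time per cell, producing a proper triangulation whose cells each intersect at most $n/r$ edges of $G$; this is precisely a $(1/r)$-cutting of expected size $O(r+(r/n)^2 k)$.

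For the overall running time, the preliminary sampling contributes $O(ns+(r/n)k)$ and the refinement and cascade together contribute $O(r+(r/n)^2 k)$; since $r\le n/\log^{(s)}n\le n$ we have $(r/n)^2 k\le (r/n)k$ and $r\le ns$, so everything collapses to $O(ns+(r/n)k)$. The main obstacle is not any single step but checking that the three lemmas compose without hidden logarithmic losses: in particular that the expected constant number of retries per heavy trapezoid in the Clarkson--Shor refinement can be absorbed into the bound from Lemma~2 rather than multiplying across all $|T'|$ trapezoids, and that the subsequent cascade and triangulation do not reintroduce the $\log r$ factor that the standard $O(n\log r+(r/n)k)$ cutting construction pays.
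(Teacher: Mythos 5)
Your proposal is correct and follows essentially the same route as the paper: it assembles Lemma~1 (sampling plus trapezoidal decomposition plus conflict lists), the Clarkson--Shor refinement with the expected-constant-retries argument, Lemma~2 (bounding $\sum_t \alpha_t^2\log^2\alpha_t$), the fractional-cascading split of Lemma~3, and a final triangulation, exactly as the paper does before stating Theorem~\ref{thm-cutting}. The explicit checks that $(r/n)^2k \le (r/n)k$ and $r\le ns$ collapse the time bound, and the observation about absorbing the per-trapezoid retries via linearity of expectation, are correct and only spell out what the paper leaves implicit.
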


Taking $s$ as a constant gives us such a $(1/r)$-cutting 
of expected size $O(r+(r/n)^2k)$ in expected time 
$O(n+(r/n)k)$, and taking $s=\log^* n$ gives us
a $(1/r)$-cutting 
of the same expected size (but with a potentially larger $r$) in
expected time $O(n\log^* n + (r/n)k)$, for any $r\le n$.
Since, in our applications involving restrained geometric graphs,
$k$ is sublinear in $n$ by an iterated logarithmic factor, we will be
taking $s$ to be a constant.

%----------------------------------
\section{Planarization}
In this section, we describe how to planarize a connected
geometric graph $G$ having $n$ vertices and $k$ edge crossings.
We begin by using the method
of Theorem~\ref{thm-cutting} to construct
a $(1/r)$-cutting, $C$, of the edges of $G$ of expected size $O(r+(r/n)^2k)$
in expected time $O(n+(r/n)k)$, where $r = n/\log^{(c+1)} n$, for a
fixed constant $c\ge 1$.
We then do a depth-first search of $G$, keeping track of the
triangles we cross in $C$ as we go, to compute, for each triangle $t$
in $C$, the set, $C(t)$,
of at most $(n/r)$ edges of $G$ that intersect $t$.
This takes $O(|C|n/r)$ time, which has expectation $O(n+(r/n)k)$.

We then apply Goodrich's separator decomposition algorithm~\cite{g-psppt-95}
to construct an $O(\sqrt{|D|})$-separator decomposition of the
graph-theoretic dual, $D$, to $C$.
Rather than taking this decomposition all the way to the point where
we would have subgraphs of $D$ of constant size, however, we stop
when subgraphs have size $O(\log^2 (n/r))$; hence, have separators of
size $O(\log (n/r))$.
Since $C$ is a triangulation, $D$ has degree $3$; hence, any
vertex separator for $D$ of size $g$ also gives us an edge separator
for $D$ of size at most $3g$.
Moreover, each edge of $D$ corresponds to a triangle edge in $C$,
which in turn crosses at most $(n/r)$ edges of $G$.
For each separator $H$ in our decomposition,
therefore, we can sort the edges of $G$ that cross each boundary of
a triangle in the separator 
in time $O((n/r)\log (n/r))$ time.
There are $O(|D|/\log^2 (n/r))$ nodes at this level of the separator
decomposition tree; hence, there are 
$O(|D|/\log^2 (n/r))\times O(\log (n/r)) = O(|D|/\log (n/r))$
triangles involved.
Thus, the total time for all these sorts is 
$O(|D|(n/r)) = O(n+k)$.

After performing all these sorts of edges on the boundaries 
of triangles in our separators, we can imagine that we have used
these boundaries to cut $G$ into $O(|D|/\log (n/r))$ regions
(including each triangle in one of our separators), such
that the edges of $G$ intersecting each region boundary 
are given in sorted order.
(See Figure~\ref{fig:regions}.)

\begin{figure}[hbt]
\centering\includegraphics[width=3in]{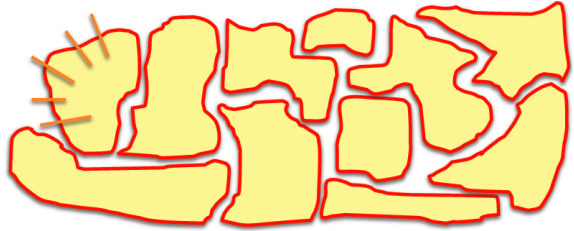}
\caption{Illustrating the regions and their boundary edges.}
\label{fig:regions}
\end{figure}

The total size of each subgraph is $O((n/r)\log^2 (n/r))$.
Moreover, the boundaries of these regions form a planar subdivision.
Thus, we have just subdivided our geometric graph $G$ into 
$O(|D|/\log (n/r))$ disjoint geometric graphs.
In other words,
all $k$ edge crossings in $G$ have been isolated
into these small subgraphs.

For each subgraph $G_i$, use Chazelle's algorithm~\cite{c-tsplt-91a} to test
if all the faces of $G_i$ are simple in $O(|G_i|)$ time.
If all the faces of $G_i$ are in fact simple, then $G_i$ clearly
contains no edge crossings.
Thus, we can identify each small subgraph in this partition that
contains an intersection in time $O(|C|(n/r) + |G|)$, which has
expectation $O(n+k)$.

Clearly, 
there are at most $k$ such subgraphs that contain edge crossings.
We complete our planarization algorithm, therefore, by running the
Bentley-Ottmann algorithm~\cite{bo-arcgi-79}
for each subgraph of $G$ that is identified
as having at least one edge crossing.
The time for each such invocation of the Bentley-Ottmann algorithm is
$O((n/r)\log^3 (n/r) + k'\log (n/r))$, where $k'\ge 1$ is the 
number of edge crossings found.
Summing this over $k$ regions implies that the total time needed to
complete the planarization of $G$ 
is $O(k(n/r)\log^3 (n/r))$.
Substituting for $r$, we see that this time is
$O(k \log^{(c+1)} n \log^3 \log^{(c+1)} n)$,
which is $O(k \log^{(c)} n )$.
Therefore, we have the following:

\begin{theorem}
\label{thm-planarization}
Suppose one is
given a connected geometric graph $G$ with $n$ vertices and
$k$ edge crossings, 
together with a $(1/r)$-cutting of the edges of $G$ of size
$O(r+(r/n)^2k)$, for $r=n/\log^{(c+1)} n$.
Then one can construct a planarization of $G$ (and the
trapezoidal decomposition of the arrangement of $G$'s edges), in
time $O(n+k\log^{(c)} n)$.
\end{theorem}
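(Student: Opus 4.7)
My plan is to use the hypothesized $(1/r)$-cutting $C$ as a coarse spatial decomposition that localizes edge crossings into small regions, and then pay for exact planarization only on the regions that actually contain crossings. First I would do a depth-first traversal of $G$, guided by $C$, to compute for every triangle $t \in C$ the conflict list $C(t)$ of edges of $G$ meeting $t$; since each $|C(t)| \le n/r$ and $|C|=O(r+(r/n)^2k)$, this takes $O(n+(r/n)k)$ time.

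Next I would apply Goodrich's linear-time recursive $O(\sqrt{|D|})$-separator decomposition to the planar dual $D$ of the triangulation $C$. Rather than decomposing all the way down to constant-size pieces, I would halt at subproblems of size $O(\log^2(n/r))$, so that the associated separators have size $O(\log(n/r))$. Because $C$ is a triangulation, $D$ has maximum degree $3$, so any vertex separator at that level yields an edge separator of the same order, and each such dual edge corresponds to a triangle edge of $C$ crossed by at most $n/r$ edges of $G$. For each separator triangle I would sort the crossing edges along its boundary in $O((n/r)\log(n/r))$ time; summing over the $O(|D|/\log(n/r))$ separator triangles at this level gives $O(|D|(n/r))=O(n+k)$.

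These sorted boundaries carve $G$ into $O(|D|/\log(n/r))$ disjoint subgraphs $G_i$, each of size $O((n/r)\log^2(n/r))$, and together they absorb every crossing of $G$. For each $G_i$ I would run Chazelle's linear-time test on its faces to decide whether any crossing is actually present, discarding those that are already planar; this classification costs $O(n+k)$ total. For each of the at most $k$ surviving subgraphs I would run the Bentley--Ottmann sweep to produce its local planarization and trapezoidal decomposition, at cost $O((n/r)\log^3(n/r)+k'\log(n/r))$ per subgraph, where $k'$ is the local number of crossings.

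The main obstacle will be showing that summing the sweep costs across all $k$ crossing-bearing subgraphs collapses to $O(k\log^{(c)}n)$. Substituting $r=n/\log^{(c+1)}n$ turns $n/r$ into $\log^{(c+1)}n$ and the overhead into $\log^3\log^{(c+1)}n$; their product is absorbed by going up one iterated-logarithm level, so the total sweep time is $O(k\log^{(c+1)}n\cdot\log^3\log^{(c+1)}n)=O(k\log^{(c)}n)$. Combined with the $O(n+(r/n)k)$ conflict-list phase and the $O(n+k)$ classification phase, this gives the stated $O(n+k\log^{(c)}n)$ bound. Finally, stitching the local planarizations to the already-planar structure of the separator boundaries produces both the global planarization of $G$ and its trapezoidal arrangement, as required.
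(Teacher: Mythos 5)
Your proposal reproduces the paper's proof essentially step for step: conflict lists via a DFS through the cutting, Goodrich's separator decomposition on the dual $D$ stopped at subproblems of size $O(\log^2(n/r))$, sorting the $\le n/r$ edges crossing each separator triangle's boundary for a total of $O(n+k)$, Chazelle-based identification of the at-most-$k$ crossing-bearing regions, Bentley--Ottmann on those regions, and the $O(k\log^{(c+1)}n\cdot\log^3\log^{(c+1)}n)=O(k\log^{(c)}n)$ calculation after substituting $r=n/\log^{(c+1)}n$. This is the same approach as the paper's, with no meaningful deviations.
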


Combining this result with Theorem~\ref{thm-cutting}, we
get the following corollary.

\begin{corollary}
\label{cor-planarization}
Given a connected geometric graph $G$ having $n$ vertices and $k$ pairwise edge
crossings, 
one can construct a planarization of $G$ in
expected time $O(n+k\log^{(c)} n)$.
\end{corollary}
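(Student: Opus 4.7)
The plan is to simply compose the two preceding results, choosing parameters carefully so that the cutting construction is cheap enough to be absorbed into the subsequent planarization step. Theorem~\ref{thm-planarization} requires as input a $(1/r)$-cutting of expected size $O(r+(r/n)^2k)$ with the specific choice $r = n/\log^{(c+1)} n$; once we have such a cutting in hand, that theorem delivers a planarization in time $O(n + k\log^{(c)} n)$. So the only question is how fast we can obtain such a cutting.

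To that end, I would invoke Theorem~\ref{thm-cutting} with $s$ chosen to be the constant $c+1$. This is legal because Theorem~\ref{thm-cutting} allows any $r \le n/\log^{(s)} n$, and for $s = c+1$ the required value $r = n/\log^{(c+1)} n$ satisfies this bound with equality. The resulting expected construction time is
\[
O(ns + (r/n)k) \;=\; O\bigl((c+1)n + k/\log^{(c+1)} n\bigr) \;=\; O(n+k),
\]
since $c$ is a fixed constant and $(r/n)k \le k$. The cutting produced has the exact expected size hypothesized by Theorem~\ref{thm-planarization}.

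Adding the two time bounds gives expected running time $O(n+k) + O(n + k\log^{(c)} n) = O(n + k\log^{(c)} n)$, which matches the claim. No obstacle arises: the entire argument is a parameter-matching exercise combined with linearity of expectation over the two randomized phases (cutting construction followed by planarization). The only mildly delicate point is to note that $s$ must be chosen as a constant no smaller than $c+1$ in order to both satisfy the range restriction on $r$ and keep the $ns$ term linear in $n$; taking $s = c+1$ accomplishes both.
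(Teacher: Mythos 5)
Your proposal is correct and matches the paper's own argument, which is exactly to construct the $(1/r)$-cutting via Theorem~\ref{thm-cutting} with $s$ a constant (and $r = n/\log^{(c+1)} n$) and then feed it into Theorem~\ref{thm-planarization}. The parameter bookkeeping you carry out, including the observation that $s$ must be a constant at least $c+1$, is the same calculation the paper implicitly relies on.
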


\section{Applications}
In this section, we provide a number of applications of the above algorithms.

%----------------------------------
\subsection{Separator Decompositions of Restrained Geometric Graphs}

The algorithms in this section are based on the use of separators.
As mentioned above, 
the separator-decomposition algorithm of Goodrich~\cite{g-psppt-95} 
applies only to planar graphs.
Nevertheless, 
given the tool of geometric graph planarization,
we can adapt Goodrich's result to restrained geometric graphs in a fairly
straightforward manner.
Given a restrained 
geometric graph $G$, we planarize it using the algorithm above, creating the
planar graph $G'$.
As observed above, 
$G'$ has total size $O(n)$. 
Thus, we can use the result of Goodrich~\cite{g-psppt-95} 
to compute a recursive
$O(\sqrt{n})$-separator decomposition of $G'$ in $O(n)$ time.
We convert this separator decomposition into a 
$O(\sqrt{n})$-separator for $G$ by the following transformation.
For each node $v$ in a separator $W$ of $G'$
at a node $w$ in the separator decomposition tree $B$,
we do the following:
\begin{itemize}
\item
If $v$ is also a vertex in $G$, then we
add $v$ to the separator for $G$ corresponding to $w$, provided $v$ is not
already a member of a separator associated with an ancestor of $w$.
\item
If $v$ is an intersection point in $G'$, between edges $(a,b)$ and $(c,d)$ 
in $G$, then we add each of $a$, $b$, $c$, and $d$
to the separator for $G$ corresponding to $w$, provided it is not
already a member of a separator associated with an ancestor of $w$.
\end{itemize}
This gives us the following:

\begin{theorem}
\label{thm-deterministic}
Suppose we are given an $n$-vertex geometric graph $G$ 
and its planarization, $G'$, which is of size $O(n)$.
Then we can construct a recursive $O(\sqrt{n})$-separator decomposition of
$G$ in $O(n)$ time, for $\delta=2/3$.
\end{theorem}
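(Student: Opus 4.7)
The plan is to apply Goodrich's linear-time planar separator decomposition algorithm to $G'$ and translate each resulting $G'$-separator to a $G$-separator using the expansion rule already laid out before the theorem. Since we are given that $|V(G')| = O(n)$, Goodrich's algorithm produces an $O(\sqrt{|V(G')|}) = O(\sqrt{n})$-separator decomposition of $G'$ in $O(n)$ time, and what remains is to verify that the translated decomposition of $G$ inherits the $O(\sqrt{n})$ separator-size bound, satisfies the balance condition with $\delta = 2/3$, and can be produced in an additional $O(n)$ time.

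The separator-size bound is essentially immediate: each vertex of a $G'$-separator $W'_w$ maps to at most four vertices of $G$ (original $G$-vertices to themselves; a crossing vertex to the four endpoints of the two edges meeting there), so $|W_w| \leq 4|W'_w|$ and the translated separators remain $O(\sqrt{n})$-sized, with sizes shrinking geometrically at deeper tree nodes. The clause ``provided $v$ is not already a member of a separator associated with an ancestor of $w$'' ensures the $G$-separators are disjoint along every root-to-leaf path. Maintaining this with a simple vertex-marking scheme, the total translation work telescopes to $\sum_w |W'_w| = |V(G')| = O(n)$, since Goodrich's separators form a partition of $V(G')$.

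The heart of the proof, and the main obstacle, is verifying that $W_w$ actually separates $G_w$ in the way that $W'_w$ separates $G'_w$. For any edge $(a,b) \in E(G)$, the planarization $G'$ represents $(a,b)$ as a subdivided path $a, v_1, \ldots, v_j, b$ whose internal vertices are exactly the crossings along that edge. If some $v_i$ lies in $W'_w$, then $v_i$ is a crossing of $(a,b)$ with another edge $(c,d)$, so the expansion rule forces $\{a,b,c,d\} \subseteq W_w$, and $(a,b)$ cannot contribute to any connection in $G \setminus W_w$; if no $v_i$ is separated out, the entire path lies in a single component of $G' \setminus W'_w$, and $a$, $b$, and the edge between them remain together in $G \setminus W_w$. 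Thus the components of $G \setminus W_w$ are exactly the intersections with $V(G)$ of the components of $G' \setminus W'_w$, and any balance constant $\delta'$ guaranteed by Goodrich in $G'$ transfers to $G$ up to a slack accounting for the crossings contained in $G'_w$ beyond the $G$-vertices it holds.

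To obtain exactly $\delta = 2/3$ for $G$, we invoke Goodrich with a slightly smaller balance constant $\delta' < 2/3$ (standard iteration of Lipton--Tarjan gives any fixed balance below $1$ in linear time), absorbing the $O(k/n) = o(1)$ slack due to the at most $k$ crossing vertices. Combined with the $O(n)$ translation step above, this yields the $O(\sqrt{n})$-separator decomposition of $G$ in $O(n)$ total time, completing the proof.
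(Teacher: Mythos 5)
Your approach matches the paper's: the paper's own ``proof'' of this theorem is exactly the construction you describe, namely running Goodrich's linear-time planar separator decomposition on $G'$ and expanding each $G'$-separator vertex into at most four $G$-vertices (itself, or the four endpoints of the two edges meeting at a crossing), subject to the ancestor-disjointness rule. Your verification of the separation property---tracking how the subdivided $G'$-path of a $G$-edge behaves under removal of $W'_w$, and noting that a removed crossing forces all four relevant $G$-endpoints into $W_w$ or an ancestor separator---is a genuinely useful expansion the paper omits, and your bounds on separator size ($\le 4|W'_w|$) and on total translation time ($O(|V(G')|)=O(n)$ via vertex marking) are correct.

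However, your balance argument has a real gap. You assert the slack from crossing vertices is ``$O(k/n)=o(1)$'' and can be absorbed by picking $\delta' < 2/3$. First, under the theorem's hypothesis only $|V(G')|=O(n)$ is assumed, so $k$ may be $\Theta(n)$ and $k/n$ is merely $O(1)$, not $o(1)$. More importantly, the relevant ratio at a node $w$ of the tree is $k_w/n_w$, the ratio of crossing vertices to $G$-vertices within $G'_w$, and this can be far larger than the global $k/n$: Goodrich's algorithm balances $|V(G'_w)|$ but is blind to which vertices are crossings, so nothing prevents the crossings from concentrating in one subtree. A child component containing a $\delta'$-fraction of $V(G'_w)$ could then contain nearly all of $V(G_w)\setminus W_w$, and the balance constant $\delta$ you get for $G$ at that node is not bounded away from $1$ no matter how small a fixed $\delta'$ you choose in advance. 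The same concentration phenomenon means $|W_w|=O(\sqrt{n_w+k_w})$ is not automatically $O(\sqrt{n_w})$. The paper is silent on this point, so you inherited the gap rather than introduced it, but your confident appeal to an $o(1)$ slack makes the issue look resolved when it is not; a correct argument needs either an additional invariant controlling $k_w/n_w$ down the tree, or a decomposition step that re-balances with respect to $G$-vertices rather than $G'$-vertices.
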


\subsection{Single-Source Shortest Paths and Voronoi Diagrams}
Given an $n$-vertex
bounded-degree graph $G$ and a recursive $O(\sqrt{n})$-separator 
decomposition for $G$,
Henzinger {\it et al.}~\cite{hkrs-fspap-97} show that one can compute shortest
paths from a single source $s$ in $G$ to all other vertices in $G$ in $O(n)$
time.
Using the separator decomposition algorithms presented above, then,
we can show that
their algorithm applies to restrained geometric graphs, even ones that do not have
bounded degree, by a simple transformation that replaces high-degree
vertices with bounded-degree trees of zero-weight edges.

Suppose we are given $K$ distinguished vertices in an $n$-vertex
restrained geometric graph $G$ and we
wish to construct 
the \emph{Voronoi diagram} of $G$, which is a labeling of each
vertex $v$ of $G$ with the name of the distinguished vertex closest
to $v$.
As before, by replacing high degree vertices with bounded-degree 
trees of zero-weight edges we can assume without loss of generality 
that $G$ has constant degree.
In this case, we construct a recursive $O(\sqrt{n})$-separator
decomposition of $G$ using one of the algorithms of the previous
section.
Let $B$ be the recursion tree and let us label each vertex $v$ in $G$
with the internal node $w$ in $B$ where $v$ is added to the separator
or with the leaf $w$ in $B$ corresponding to a set containing $v$ where
we stopped the recursion (because the set's size was below our
stopping threshold).
Given this labeling, we can trace out the subtree $B'$ of $B$ 
that consists of the union of paths from the root of $B$ to the
distinguished nodes in $G$ in $O(n)$ time.
Let us now assign each edge in $B'$ to have weight $0$ and let us add
$B'$ to $G$ to create a larger graph $G'$.
Note that if we add each internal node $v$ in $B'$ to the separator
associated with node $v$ in $B$, then we get a recursive 
$O(\sqrt{n})$-separator decomposition for $G'$, for each separator in
the original decomposition increases by at most one vertex.
Thus, we can apply the algorithm of Henzinger {\it et al.}~\cite{hkrs-fspap-97}
to compute the shortest paths in $G'$ from the root of $B'$ to every
other vertex in $G'$ in $O(n)$ time.
Moreover, since the edges of $G'$ corresponding to edges of $B'$ have
weight $0$, this shortest path computation will give us the Voronoi
diagram for $G$.
Therefore, we have the following:

\begin{theorem}
\label{thm-vd}
Given a connected $n$-vertex restrained graph $G$, together with
its planarization,
one can compute shortest paths from any vertex
$s$ or the Voronoi diagram defined by any set of $K$ vertices in $G$
in $O(n)$ time.
\end{theorem}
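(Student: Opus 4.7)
The plan is to reduce both problems to a single-source shortest path computation on a bounded-degree graph equipped with a recursive $O(\sqrt{n})$-separator decomposition, so that the linear-time algorithm of Henzinger \emph{et al.}~\cite{hkrs-fspap-97} can be applied directly. First I would handle the degree restriction: any vertex $v$ of degree $d>3$ is replaced by a binary tree on $d$ leaves whose internal edges carry weight zero, with the original incident edges reattached to the leaves in their angular order. This blow-up increases the vertex count by a constant factor, preserves the graph metric exactly, and preserves both planarity of the planarization and the restrained-graph property; so the new graph still has size $O(n)$.

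Next I would invoke Theorem~\ref{thm-deterministic} on the bounded-degree planarized graph to obtain in $O(n)$ time a recursive $O(\sqrt{n})$-separator decomposition of $G$ itself. For the single-source shortest path problem, the algorithm of Henzinger \emph{et al.} then runs directly in $O(n)$ time on $G$ with this decomposition, proving that half of the claim. For the Voronoi diagram problem, the idea is to reduce it to a single-source shortest path computation by introducing a virtual super-source joined to all $K$ distinguished sites through zero-weight edges, routed \emph{along the separator decomposition tree} so as not to destroy the $O(\sqrt{n})$-separator bound. Concretely, label each vertex $v$ of $G$ by the node $w$ of the recursion tree $B$ in whose separator $v$ first appears, extract the subtree $B'$ of $B$ consisting of the union of root-to-label paths for the $K$ distinguished vertices by a standard marking pass in $O(n)$ time, attach $B'$ to $G$ as zero-weight edges to form $G'$, and finally run Henzinger \emph{et al.} on $G'$ from the root of $B'$. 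The returned distance to each $v\in G$ is exactly the graph distance from $v$ to its nearest distinguished vertex, and the predecessor along the shortest path identifies which site achieves it.

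The main obstacle is verifying that grafting $B'$ into $G$ and placing each internal node of $B'$ into the separator corresponding to its image in $B$ still yields a valid recursive $O(\sqrt{n})$-separator decomposition of $G'$. Three things need checking: (i) each separator of $G$ grows by only one vertex at each level of the recursion, so the separators remain $O(\sqrt{n})$; (ii) the zero-weight tree edges of $B'$ connect two sides of a separator only through the newly inserted $B'$-vertex lying inside that separator, so the recursive decomposition still partitions $G'$ into balanced pieces; and (iii) the bounded-degree property is preserved because $B'$ is itself binary, and a second pass of the degree-reduction trick handles any vertex of $G$ whose $B'$-neighborhood pushes its degree above three. Once these structural properties are in hand, the $O(n)$ running time is inherited directly from Henzinger \emph{et al.}, and correctness of the Voronoi output follows from the zero-weight nature of the $B'$ edges, which makes the distance in $G'$ from the super-source to any $v\in G$ equal to the distance in $G$ from $v$ to its nearest site.
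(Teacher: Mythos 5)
Your proposal matches the paper's proof almost step for step: the same degree-reduction by zero-weight binary trees, the same application of Theorem~\ref{thm-deterministic} and Henzinger \emph{et al.}\ for shortest paths, and the same construction of the Voronoi diagram by grafting the subtree $B'$ of the separator-recursion tree as a zero-weight super-source tree, with each internal $B'$-node added to its corresponding separator so that each separator grows by at most one vertex. Your additional verification points (ii) and (iii) are slightly more explicit than the paper's treatment but do not change the argument.
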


Incidentally, the above approach also implies a linear-time Voronoi diagram 
construction algorithm for planar graphs, which was not previously known.

\section{Conclusions and Future Work}
We have provided linear-time algorithms for a number of problems
on connected restrained geometric graphs, which includes real-world
road networks. Our results allow for linear-time trapezoidalization, triangulation, and planarization of geometric graphs except for the very narrow range of the number of crossings for which neither our algorithm nor the previous $O(n\log^* n+k)$ algorithm is linear.
In addition, our methods imply linear-time algorithms for other problems on
such graphs as well. For example, one can use our algorithm to planarize a
restrained non-simple polygon and then construct its convex hull in linear
time by computing the convex
hull of the outer face of our planarization (e.g., by an algorithm
from~\cite{gy-fchsp-83,l-fchsp-83}).
There are a number of interesting open problems and future research directions 
raised by this paper, including: 
\begin{itemize}
\item
Can one close the $\log^{(c)} n$ gap on values of $k$ 
that admit optimal 
solutions to Chazelle's
open problem of
computing a trapezoidal decomposition of an $n$-vertex non-simple polygon in
$O(n+k)$ time, where $k$ is the number of its edge crossings?

\item Can we planarize restrained geometric graphs deterministically in linear time? Such a result would allow us to apply separator-based divide and conquer techniques for minimum spanning trees~\cite{EppGalIta-JCSS-96} to construct them in linear time for this family of graphs. Known linear-time minimum spanning tree algorithms for arbitrary graphs require randomization~\cite{kkt-rltamst-95}, and known deterministic algorithms for this problem are superlinear~\cite{c-amsta-00}, although deterministic linear-time algorithms are known for planar graphs and minor-closed graph families~\cite{ct-fmst-76,Epp-HCG-00,m-tltam-04}.
\end{itemize}

\subsection*{Acknowledgment}
We would like to thank Bernard Chazelle for several helpful 
discussions regarding possible approaches to solving his open problem
involving non-simple polygons.
This research was supported in part by the National Science Foundation, under
grants 0724806, 0713046, and 0830403,
and the Office of Naval Research, under
MURI Award number N00014-08-1-1015.
A preliminary version of this paper appeared in the ACM-SIAM
Symposium on Discrete Algorithms (SODA) as~\cite{egs-ltagg-09}.

\ifFull
% \raggedright
\bibliographystyle{abbrv}
\bibliography{extra,geom,roads,goodrich}

\else

\fi
\end{document}